\DeclareFontFamily{OT1}{pzc}{}
\DeclareFontShape{OT1}{pzc}{m}{it}{<-> s * [1.10] pzcmi7t}{}
\DeclareMathAlphabet{\mathpzc}{OT1}{pzc}{m}{it}
\newtheorem{theorem}{Theorem}[section]
\newtheorem{lemma}[theorem]{Lemma}
\newtheorem{remark}[theorem]{Remark}
\providecommand{\R}{\mathbb{R}}
\providecommand{\E}{\mathbb{E}} 
\providecommand{\SO}{\mathbf{SO}}
\providecommand{\SE}{\mathbf{SE}}
\providecommand{\grpG}{\mathbf{G}}
\providecommand{\gothg}{\mathfrak{g}}
\providecommand{\gothX}{\mathfrak{X}} 
\providecommand{\so}{\mathfrak{so}}
\providecommand{\tT}{\mathrm{T}} 
\providecommand{\GP}{\mathbf{N}} 
\DeclareMathOperator{\Ad}{Ad}
\DeclareMathOperator{\ad}{ad}
\providecommand{\id}{\mathrm{id}} 
\providecommand{\td}{\mathrm{d}}
\providecommand{\tD}{\mathrm{D}}
\providecommand{\ddt}{\frac{\td}{\td t}}
\providecommand{\scirc}{%
    \hbox{\fontfamily{\rmdefault}\fontsize{0.4\dimexpr(\f@size pt)}{0}\selectfont{\raisebox{-0.52ex}[0ex][-0.52ex]{$\circ$}}}}
\providecommand{\ucirc}{%
    \hbox{\fontfamily{\rmdefault}\fontsize{0.4\dimexpr(\f@size pt)}{0}\selectfont{\raisebox{0.0ex}[0ex][-0.52ex]{$\circ$}}}}
\mathchardef\mhyphen="2D
\providecommand{\order}{\mathbf{o}} 
\providecommand{\logG}{\log_\grpG}
\providecommand{\logGv}{\log^\vee_\grpG}
\providecommand{\expG}{\exp_\grpG}
\providecommand{\pt}{\mathbf{P}}
\providecommand{\jacobi}{\mathcal{J}}
\title{\LARGE \bf
A Geometric Perspective on Fusing Gaussian Distributions on Lie Groups
}
\author{
    \href{https://orcid.org/0000-0001-7969-7039}{\includegraphics[scale=0.06]{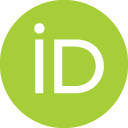}\hspace{1mm}
Yixiao Ge}
\\
    Systems Theory and Robotics Group \\
    School of Engineering \\
	Australian National University \\
    ACT, 2601, Australia \\
    \texttt{Yixiao.Ge@anu.edu.au} \\
\And    \href{https://orcid.org/0000-0003-4391-7014}{\includegraphics[scale=0.06]{orcid.png}\hspace{1mm}
Pieter van Goor}
\\
    Systems Theory and Robotics Group \\
    School of Engineering \\
	Australian National University \\
    ACT, 2601, Australia \\
    \texttt{Pieter.vanGoor@anu.edu.au} \\
	\And	\href{https://orcid.org/0000-0002-7803-2868}{\includegraphics[scale=0.06]{orcid.png}\hspace{1mm}
    Robert Mahony}
\\
    Systems Theory and Robotics Group \\
    School of Engineering \\
	Australian National University \\
    ACT, 2601, Australia \\
	\texttt{Robert.Mahony@anu.edu.au} \\
}
\begin{document}

\maketitle

\begin{abstract}
    Stochastic inference on Lie groups plays a key role in state estimation problems such as; inertial navigation, visual inertial odometry, pose estimation in virtual reality, etc.
    A key problem is fusing independent concentrated Gaussian distributions defined at different reference points on the group.
    In this paper we approximate distributions at different points in the group in a single set of exponential coordinates and then use classical Gaussian fusion to obtain the fused \emph{posteriori} in those coordinates.
    %
    %
    %
    %
    We consider several approximations including the exact Jacobian of the change of coordinate map, first and second order Taylor's expansions of the Jacobian, and parallel transport with and without curvature correction associated with the underlying geometry of the Lie group.
    Preliminary results on $\SO(3)$ demonstrate that a novel approximation using parallel transport with curvature correction achieves similar accuracy to the state-of-the-art optimisation based algorithms at a fraction of the computational cost.
    \end{abstract}

    \section{Introduction}

    The rising interest in modern robotic and avionic systems over the past 20 years has led to increased interest in state estimation for systems defined on differentiable manifolds, and particularly on Lie groups and homogeneous spaces.
    In the Euclidean space setting, the \textit{de facto} approach to sensor fusion is that used by the (extended) Kalman filter \cite{kalmanNewApproachLinear1960}: the prior probability of the state estimate and the measurement likelihood are linearised and combined through Bayesian fusion.
    This approach is not directly compatible with the nonlinear structure of a smooth manifold and has motivated a significant body of work to adapt Bayesian fusion methodologies to the manifold setting.
    The Invariant Extended Kalman Filter (IEKF) \cite{barrauInvariantExtendedKalman2017} and Equivariant Filter (EqF) \cite{mahonyObserverDesignNonlinear2022} both contain an update step, where they pose the fusion problem in one set of exponential coordinates centered at the current state estimate, and linearise the probability functions to apply Bayesian fusion.
    In a separate work \cite{wangErrorPropagationEuclidean2006}, Wang and Chirikjian present the concept of a concentrated Gaussian distribution, which can be used to model fusion of distributions on Lie groups with exponential coordinates centered at multiple different group elements.
    This formulation is used by \cite{bourmaudIntrinsicOptimizationIterated2016,barfootAssociatingUncertaintyThreeDimensional2014}, where the authors propose an optimization algorithm to fit the posterior distribution using Bayes rule on the concentrated Gaussians directly.
    In \cite{wolfeBayesianFusionLie2011}, this same fusion problem on Lie groups is solved numerically by truncating the Baker-Campbell-Hausdorff (BCH) formula with different numbers of terms.
    Recently, in \cite{geEquivariantFilterDesign2022,geNoteExtendedKalman2023}, the authors present a new scheme that models the covariance as a tensor object on the tangent space, and uses parallel transport to compensate for the intrinsic nonlinearity of the underlying manifold.

    \begin{figure}
        \includegraphics[width=1.06\linewidth]{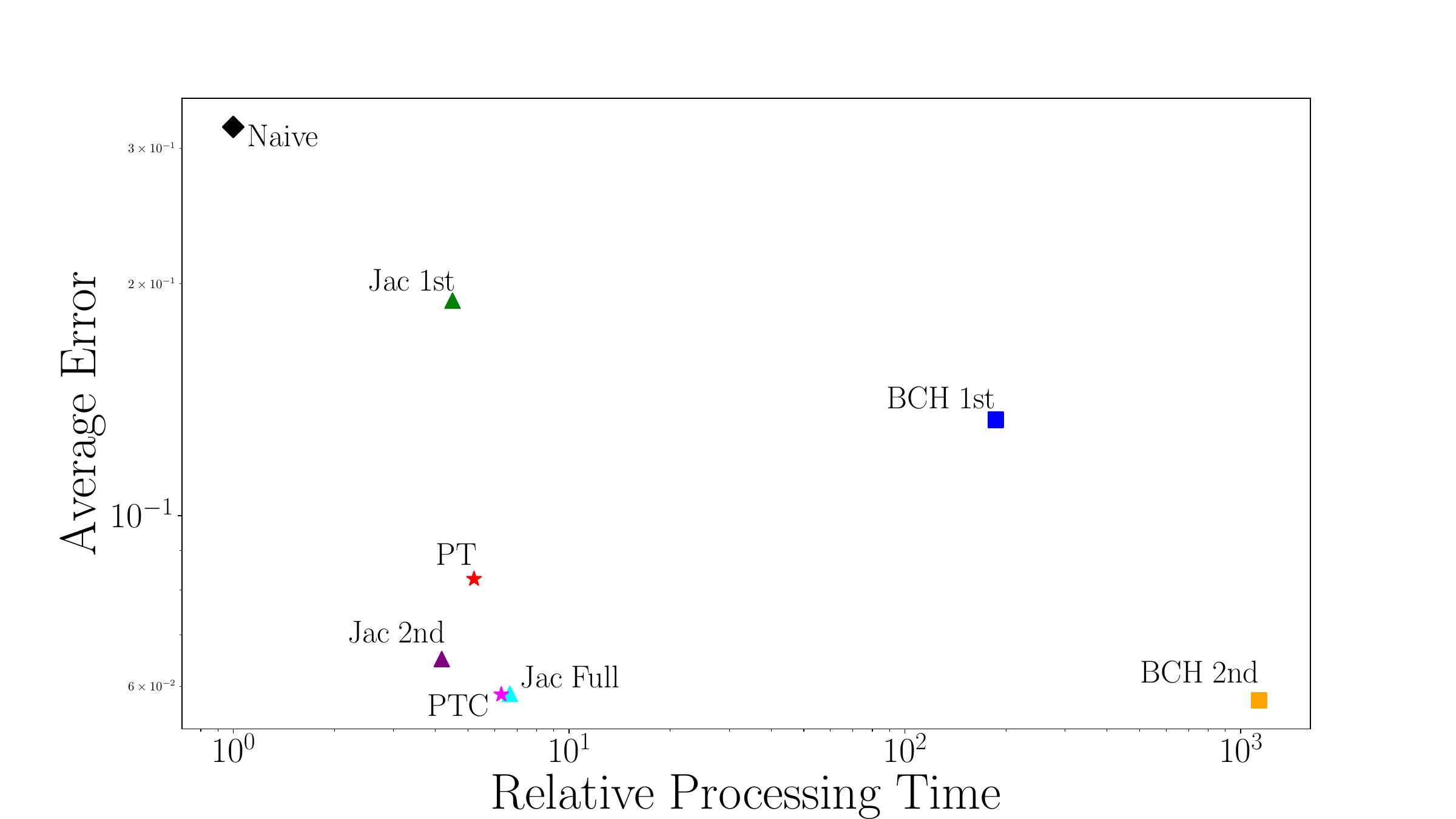}
        \centering
        \caption{
            Comparative study of the average error~\eqref{eq:err_metric} and the relative processing time of each fusion method.
            Relative processing time is measured as a ratio of algorithm run-time with respect to run-time for the naive fusion (\textcolor{black}{$\diamond$}) algorithm.
            ``Jac Full(\textcolor{cyan}{$\triangle$})'' refers to using the analytic form of the Jacobian of the transition functions.  ``Jac 1st(\textcolor{green}{$\triangle$})'' and ``Jac 2nd(\textcolor{purple}{$\triangle$})'' refer to first and second order Taylor approximations of the Jacobian.  ``PT(\textcolor{red}{$\star$})'' refers to parallel transport and ``PTC(\textcolor{magenta}{$\star$})'' to parallel transport with curvature correction.  ``BCH 1st(\textcolor{blue}{$\square$})'' and ``BCH 2nd(\textcolor{yellow}{$\square$})'' refer to the Baker-Campbell-Hausdorff approximation methods proposed in \cite{wolfeBayesianFusionLie2011}.}
        \label{fig:main_time}
    \end{figure}
    
    In this paper we revisit the problem of stochastic fusion of concentrated Gaussians on Lie groups.
    We extend the concept of concentrated Gaussian to allow the mean of the Gaussian to be separate from the group element at which the exponential coordinates for the distribution are centred.
    This allows us to treat Gaussian fusion on the tangent space without requiring computationally expensive optimisation procedures.
    However, it is important to compensate for the change of coordinates associated with defining a covariance of a distribution at a non-zero mean in a given set of coordinates.
    We consider several approximations including computing the exact Jacobian of the change of exponential coordinates map, first and second order Taylor's expansions of the Jacobian, and a parallel transport method.
    We also propose a novel method that does parallel transport with a curvature correction associated with the underlying geometry of the Lie group.
    This is particularly of interest since it can be implemented using only the matrix exponential function for which efficient implementations are readily available.
    We compare these five approximations with the BCH-based optimisation algorithms that are considered the state-of-art for fusion of concentrated Gaussians \cite{wolfeBayesianFusionLie2011}.
    Preliminary results on $\SO(3)$ demonstrate that using parallel transport with curvature correction achieves similar accuracy to the state-of-the-art optimisation based algorithms at a fraction of the computational cost, as shown in Fig~\ref{fig:main_time}.

    

    \section{Preliminaries}
    \subsection{General Lie groups}\label{sec:generalLie}
    Let $\grpG$ be a general Lie group with dimension $n$, associated with Lie algebra $\gothg$.
    Let $\id$ denote the identity element of $\grpG$.
    Given arbitrary $X,Y\in\grpG$, the left and right translations are denoted by $\textrm{L}_X$ and $\textrm{R}_X$, and are defined by
    \[
        \textrm{L}_X(Y):=XY, \quad \textrm{R}_X(Y):=YX.
    \]
    
    The Lie algebra $\gothg$ is isomorphic to a vector space $\R^n$ with the same dimension.
    We use the wedge $(\cdot)^\wedge:\R^n\rightarrow\gothg$ and vee $(\cdot)^\vee:\gothg\rightarrow\R^n$ operators to map between the Lie algebra and vector space.
    The Adjoint map for the group $\grpG$, $\Ad_X:{\gothg}\to{\gothg}$ is defined by
    \[
        \Ad_{X}[{{u}^{\wedge}}] = \tD \textrm{L}_{X} \circ\tD \textrm{R}_{X^{-1}}\left[{u}^{\wedge}\right] ,
    \]
    for every $X \in \grpG$ and ${{u}^{\wedge} \in \gothg}$, where $\tD \textrm{L}_{X}$, and $\tD \textrm{R}_{X}$ denote the differentials of the left, and right translation, respectively.
    Given particular wedge and vee maps, the Adjoint matrix is defined as the map ${\Ad_{X}^\vee: {\R^{n}}\to{\R^{n}}}$
    \begin{equation*}
        \Ad_{X}^\vee u = \left(\Ad_{X}{{u}^{\wedge}}\right)^{\vee} .
    \end{equation*}
    The adjoint map for the Lie algebra $\ad_{{u}^\wedge}: {\gothg}\to{\gothg}$ is given by
    \begin{equation*}
        \ad_{{u}^\wedge}{{v}^{\wedge}} = \left[{u}^{\wedge}, {v}^{\wedge}\right] ,
    \end{equation*}
    and is equivalent to the Lie bracket.
    Given particular wedge and vee maps, the adjoint matrix $\ad_u^\vee:\R^n\to\R^n$ is defined to be
    \[
        \ad_u^\vee v = \left(u^\wedge v^\wedge - v^\wedge u^\wedge\right)^\vee = \left[{u}^{\wedge}, {v}^{\wedge}\right]^\vee.
    \]
    
    Let $\expG:\gothg\to\grpG$ denote the exponential map from the Lie algebra element to the group element.
    For matrix Lie groups such as $\SO(3),\SE(3)$, this map is simply the matrix exponential.
    Let $\grpG'\subset\grpG$ be the subset of $\grpG$ where the exponential map is invertible, one can then define the logarithm map $\logG:\grpG'\to\gothg$ and $\logG^\vee:\grpG'\to\R^n$.
    For simplicity, we will suppress the wedge `$\wedge$' operator in the exponential map throughout this paper.
    
    The directional derivative of $\expG$ at a point $u\in\gothg$ in the direction of $w\in\gothg$ is given by \cite[Theorem 1.7]{helgasonDifferentialGeometryLie1979}
    \[
        \left.\ddt \expG({u+t w})\right|_{t=0}=\tD \textrm{L}_{\expG(u)}\left\{\frac{\textrm{I}-\expG({-\ad_u})}{\ad_u}(w)\right\}.
    \]
    The differential $\tD_u\expG:\gothg\to\tT_{\expG(u)}\grpG$ of $\expG$ at $u$ can be derived immediately:
    \[
        \tD_u\expG:=\tD \textrm{L}_{\expG(u)}\circ\frac{\textrm{I}-\exp({-\ad_u})}{\ad_u}.
    \]
    This map is transcendental and is computed using asymptotic expansions for a general Lie group, although for specific Lie groups such as $\SO(3)$ and $\SE(3)$, algebraic forms exist in terms of known trigonometric transcendental functions $\sin$ and $\cos$ \cite{chirikjianStochasticModelsInformation2011}.
    By identifying $\tT_{\expG(u)}\grpG$ with $\R^n$ via left trivialization and `$\vee$' operator, one can define the \emph{left trivialised} Jacobian map $J_u:\R^n\to\R^n$ as
    \begin{align}\label{eq:jacobian}
        J_u:= \tD \textrm{L}_{\expG(u)}^{-1}\circ \tD_u\expG(u) =  \frac{\textrm{I}-\exp(-\ad_u^\vee)}{\ad_u^\vee},
    \end{align}
    for any $u\in\gothg$ in the domain of $\expG$.
    The left trivialised Jacobian is referred to as the \emph{right} Jacobian in much of the key literature \cite{chirikjianStochasticModelsInformation2011}.
    The \emph{left} Jacobian can be obtained analogously by right trivialisation in the construction \eqref{eq:jacobian}.
    %

    For two functions $A, B : \R^m \to \R^n$ then we write $A(u)  = B(u) + \mathbf{o}(|u|^k)$ if $\lim_{|u| \to 0} (A(u) - B(u))/|u|^k = 0$.
    Note that this  definition is invariant under a change of coordinates.

    \subsection{Connection, curvature and parallel transport}
    For an arbitrary manifold, an affine connection is a geometric structure that is additional to the underlying differential structure.
    On Lie groups, however, there is a \emph{canonical connection} (also known as the Cartan-Schouten (0)-connection), which is defined as the only affine connection that is left-invariant, torsion-free, and has geodesics given by the Lie exponential \cite{kobayashiFoundationsDifferentialGeometry1996a}.
    In the case of compact Lie groups, such as $\SO(3)$, the canonical connection coincides with the Levi-Civita connection associated with the bi-invariant metric.
    In the rest of this paper, we will consider only the Cartan-Schouten (0)-connection as the default choice of affine connection on Lie groups.

    Let $X,Y\in\gothX(\grpG)$ be two left-invariant vector fields on $\grpG$, corresponding to $x, y\in\gothg$ respectively.
    For the (0)-connection, the covariant derivative $\nabla_XY$ is given by
    \begin{align}\label{eq:connection0}
        \nabla_XY = \frac{1}{2}\left[X,Y\right].
    \end{align}
    
    The Riemann curvature tensor $\mathcal{R}:\gothX(\grpG)\times\gothX(\grpG)\times\gothX(\grpG)\to\gothX(\grpG)$ is defined by
    \[
        \mathcal{R}(X, Y) Z=\nabla_X \nabla_Y Z-\nabla_Y \nabla_X Z-\nabla_{[X, Y]} Z.
    \]
    Using the Jacobi identity of the Lie bracket, it is straightforward to verify that on a Lie group $\grpG$ with (0)-connection~\eqref{eq:connection0} the Riemann curvature tensor is given by
    \begin{align}
       \mathcal{R}(X,Y)Z = -\frac{1}{4}\left[\left[X,Y\right],Z\right],
    \label{eq:curvature_liegroup}
    \end{align}
    where $X, Y, Z \in\gothX(\grpG)$ are left-invariant vector fields.
    
    
    For the (0)-connection, the curve $\expG(tu)$ for $u \in \gothg$ arbitrary, is a geodesic.
    Given $w\in\gothg$, the parallel translation of $w$ along $\expG(tu)$ is given by
    \[
        \pt_{tu}[w] = \tD \mathrm{L}_{\expG(tu)}\circ \Ad_{\expG(-\frac{t}{2}u)}(w)\in\tT_{\expG(tu)}\grpG.
    \]
    By identifying $\tT_{\expG(tu)}\grpG$ with $\R^n$ via the left trivialisation and the `$\vee$' operator, one can define the following \emph{left trivialised parallel transport} map $\pt^\vee_u:\R^n\to\R^n$
    \[
        \pt^\vee_u:= \Ad^\vee_{\expG(-\frac{1}{2}u)}
        = \exp(-\frac{1}{2} \ad^\vee_{u})
        \in\R^{n\times n}.
    \]

    \section{Problem Formulation}
    
        In this section, we recall the concept of a concentrated Gaussian distribution on Lie groups \cite{wangErrorPropagationEuclidean2006} and its extended version \cite{geEquivariantFilterDesign2022,geNoteExtendedKalman2023}.
        We go on to formulate the fusion problem on Lie groups considered in this paper.

    \subsection{Extended Concentrated Gaussian}
    For a random variable $g\in\grpG$, the classical construction of concentrated Gaussian distribution on Lie group \cite{wangErrorPropagationEuclidean2006} is given by
    \begin{align*}
        p(g ; x, \Sigma) = \alpha \exp \left\{-\frac{1}{2}\left[\logGv\left(x^{-1}g\right)\right]^\top \Sigma^{-1} \logGv\left(x^{-1}g\right)\right\},
    \end{align*}
    where $\alpha$ is the normalising factor.
    The parameters $x\in\grpG$, $\Sigma\in\mathbb{S}_+(n)$ are the \emph{group} mean and covariance respectively, which are defined such that \cite{chirikjianStochasticModelsInformation2011}
    \begin{align*}
        \int_G \logGv\left(x^{-1} g\right) p(g) d g=\mathbf{0},
    \end{align*}
    and
    \begin{align*}
        \Sigma=\int_G\left[\logGv\left(x^{-1} g\right)\right]\left[\logGv\left(x^{-1} g\right)\right]^\top p(g) d g.
    \end{align*}
    This construction is equivalent to defining a random variable on $\grpG$ by
    \[
        g = x\expG(\epsilon), \qquad \epsilon\sim \GP(0, \Sigma)
    \]
    where $\epsilon$ is a random variable associated with a normal distribution on $\R^n$.
    
    In more recent works \cite{geEquivariantFilterDesign2022}\cite{geNoteExtendedKalman2023}, this concept was extended to allow offset mean in the Lie algebra
    \begin{align}\label{eq:ECGD}
        &p(g ; x, \mu, \Sigma) = \\
        &\alpha \exp \left\{-\frac{1}{2}\left[\logGv\left(x^{-1}g\right)-\mu^\vee\right]^\top \Sigma^{-1} \left[\logGv\left(x^{-1}g\right)-\mu^\vee\right]\right\},
    \end{align}
    where $x\in\grpG$ is termed the \emph{reference point}, $\mu\in\gothg$ is termed the \emph{mean}.
    The extended concentrated Gaussian distribution is equivalent to defining a random variable
    \[
    g = x\expG(\epsilon), \qquad \epsilon\sim \GP(\mu^\vee, \Sigma).
    \]
    The extended concentrated Gaussian makes the role of the reference point as the origin of local coordinates on the group, as separate from the mean of the underlying distribution, clear.
    Both the classical and extended concentrated Gaussian distributions are approximations of the true distributions of a random variable on a Lie group after fusion.
    The key question is not whether they are the correct model, but rather how accurately they can represent real distributions.
    
    %

    \subsection{Fusion problem}
    
    Consider the scenario where one has $n$ independent unbiased estimates $g \sim \GP_{x_i}(0,\Sigma_i) = p_i(g|x_i, \Sigma_i)$ for a random variable $g \in \grpG$.
    Each estimate is derived from independent data captured in the parameters  $x_i \in \grpG$ and $\Sigma_i \in \mathbb{S}_+(n)$.
    We want to derive a fused estimate $g \sim \GP_{x^+}(0,\Sigma^+)
    \approx p(g|x_1, \ldots, x_n, \Sigma_1, \ldots, \Sigma_n)$ based on all the available data.
    In classical Gaussian fusion the solution is the product of the Gaussians and can be written as a Gaussian.
    However, the product of concentrated Gaussians is not a concentrated Gaussian and the goal of this work is to find the parameters $x^+ \in \grpG$ and $\Sigma^+ \in S_+(n)$ that best approximate the fused density.

    \section{Coordinate Representation of Concentrated Gaussian Distribution}
    
        In this section, we present the technical results including expressing the concentrated Gaussian in different reference coordinates using Jacobian mapping, and a novel approach to approximate the Jacobian with the curvature structure of the exponential map.

    \subsection{Changing reference}
    The extended concentrated Gaussian \eqref{eq:ECGD} is introduced in order to provide a model to express Gaussians around a reference that does not coincide with their mean.
    A detailed formulation is provided in the following lemma.
    
    \begin{lemma}\label{lemma:change_reference}
    Given an extended concentrated Gaussian distribution $p(g) = \GP_{x_1}(\mu_1,\Sigma_1)$ on $\grpG$ and a point $x_2 \in \grpG$ then the concentrated Gaussian $q(g) = \GP_{x_2}(\mu_2,\Sigma_2)$ with parameters
    \begin{align}
            &\mu_2 = \logG(x_2^{-1}x_1\expG(\mu_1))\label{eq:change_reference_mu}\\
            &\Sigma_2 = J_{\mu_2}^{-1}J_{\mu_1}\Sigma_1 J_{\mu_1}^\top J_{\mu_2}^{-\top}\label{eq:change_reference_cov}
    \end{align}
    minimises the Kullback-Leibler divergence of $q(g)$ with respect to $p(g)$ up to second-order linearisation error.
    \end{lemma}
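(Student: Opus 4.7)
The plan is to identify the Gaussian $q$ as the second-order approximation of the push-forward of $p$ under the change-of-coordinate map, and to invoke the standard fact that moment matching produces the Gaussian minimising the forward Kullback--Leibler divergence to a given distribution. This reduces the lemma to a first-order linearisation of a composition involving $\expG$ and $\logG$.

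First, I would express $g \sim p$ as $g = x_1 \expG(\epsilon_1)$ with $\epsilon_1 \sim \GP(\mu_1, \Sigma_1)$, and write the coordinates of $g$ around the new reference $x_2$ as $\epsilon_2 = \phi(\epsilon_1)$, where
\[
    \phi(\epsilon_1) := \logGv\left(x_2^{-1} x_1 \expG(\epsilon_1)\right).
\]
The push-forward $\phi_\ast p$ is non-Gaussian in general, but among Gaussians in the $\epsilon_2$ coordinates the forward KL divergence is minimised by the one whose mean and covariance coincide with those of $\phi_\ast p$. The task therefore reduces to computing $\mathbb{E}[\epsilon_2]$ and $\mathrm{Cov}(\epsilon_2)$ to the claimed order of accuracy.

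Next, I would linearise $\phi$ about $\epsilon_1 = \mu_1$. The differential identity
$\left.\ddt \expG(\mu_1 + tw)\right|_{t=0} = \tD \textrm{L}_{\expG(\mu_1)}(J_{\mu_1} w)$ from Section~\ref{sec:generalLie} gives $\expG(\mu_1 + tw) = \expG(\mu_1)\expG(t J_{\mu_1} w) + O(t^2)$. Setting $\mu_2 := \logGv(x_2^{-1} x_1 \expG(\mu_1))$ so that $x_1 \expG(\mu_1) = x_2 \expG(\mu_2)$, which is exactly~\eqref{eq:change_reference_mu}, this yields
\[
    x_2^{-1} x_1 \expG(\mu_1 + tw) = \expG(\mu_2)\expG(t J_{\mu_1} w) + O(t^2).
\]
Applying $\logGv$ and using that the left-trivialised differential of $\logG$ at $\mu_2$ is $J_{\mu_2}^{-1}$, I obtain
\[
    \phi(\mu_1 + tw) = \mu_2 + t J_{\mu_2}^{-1} J_{\mu_1} w + O(t^2).
\]

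Finally, I would propagate the moments of $\epsilon_1$ through the linearised map: to second order, $\mathbb{E}[\epsilon_2] = \phi(\mu_1) = \mu_2$ and $\mathrm{Cov}(\epsilon_2) = J_{\mu_2}^{-1} J_{\mu_1} \Sigma_1 J_{\mu_1}^\top J_{\mu_2}^{-\top}$, which is~\eqref{eq:change_reference_cov}. The main technical obstacle is the linearisation step: one has to compose the differentials of $\expG$ and $\logG$ while tracking the left trivialisations carefully, so that the Jacobian factors appear in the combination $J_{\mu_2}^{-1} J_{\mu_1}$ rather than any of the left--right reflected variants that arise if one inadvertently switches trivialisation or conjugates by an $\Ad$. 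Controlling the $O(t^2)$ remainder uniformly over the support of $\epsilon_1$, which is implicit in the concentrated hypothesis on $\Sigma_1$, is also needed to justify the ``second-order linearisation error'' qualifier in the statement.
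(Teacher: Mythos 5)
Your proposal is correct and follows essentially the same route as the paper: the paper likewise reduces the problem to moment matching (by differentiating $\mathrm{KL}(p\|q)$ in $\Sigma_2$ to show the minimiser is the second moment of the pushed-forward coordinates) and then performs the same first-order linearisation of the change-of-coordinates map to obtain $J_{\mu_2}^{-1}J_{\mu_1}$, the only cosmetic difference being that the paper linearises the inverse map $\phi_1$ in terms of $\phi_2$ and inverts, whereas you linearise the forward map directly.
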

    \begin{proof}
        The Kullback-Leibler divergence between $p(g)$ and $q(g)$ is given by
        \begin{align*}
            {\text{KL}}(p||q)&= \E_p[\log(p)-\log(q)]\\
                & = C_p +\frac{n}{2}\log(2\pi)+ \frac{1}{2}\log\det(\Sigma_2)\\
                &+\frac{1}{2}\E_p[\left(\logG^\vee(x_2^{-1}g)-\mu_2^\vee\right)^\top \Sigma_2^{-1}\left(\logG^\vee(x_2^{-1}g)-\mu_2^\vee\right)],
        \end{align*}
        where $C_p$ is the negative entropy of $p(g)$.
        The mean $\mu_2$ in \eqref{eq:change_reference_mu} can be derived by assuming that
        \[
            x_1\expG(\mu_1)=x_2\expG(\mu_2).
        \]
        Take the derivative of $ {\text{KL}}(p||q)$ with respect to $\Sigma_2$ and the critical point is given by
         \begin{align}\label{eq:R_prime_expect}
            \Sigma_2 = \E_p[\left(\logG^\vee(x_2^{-1}g)-\mu_2^\vee\right)\left(\logG^\vee(x_2^{-1}g)-\mu_2^\vee\right)^\top].
        \end{align}
    
        Define $\phi_1:\grpG\to\R^n$ and $\phi_2:\grpG\to\R^n$ as
        \begin{align*}
            &\phi_1(g): = \logG^\vee(x_1^{-1}g)-\mu_1^\vee,\\
            &\phi_2(g): = \logG^\vee(x_2^{-1}g)-\mu_2^\vee.
        \end{align*}
        For an arbitrary $g\in\grpG$ one has
        \begin{align}\label{eq:transfer_function}
            \phi_1(g) = \logGv(x_1^{-1}x_2\expG(\phi_2(g)+\mu_2^\vee))-\mu_1^\vee
        \end{align}
        Taking the Taylor series of \eqref{eq:transfer_function} at $\phi_2(g) = 0_{n\times 1}$ up to first order yields:
        \begin{align*}
            \phi_1(g) \approx& \tD\logG^\vee(\expG(\mu_1))\cdot\tD \textrm{L}_{x_1^{-1}x_2}\cdot\tD \expG(\mu_1) [\phi_1(g)]\\
            =&\tD\logG^\vee(\expG(\mu_1))\cdot\tD \textrm{L}_{\expG(\mu_1)}\cdot \tD\textrm{L}_{(x_1\expG(\mu_1))^{-1}}\cdot\\
            & \quad\tD\textrm{L}_{x_2\expG(\mu_2)}\cdot\tD\textrm{L}_{\expG(-\mu_2)}\cdot\tD \expG(\mu_2)\; [\phi_2(g)]\\
            =&\left(\tD\textrm{L}_{\expG(-\mu_1)}\cdot\tD\expG(\mu_1)\right)^{-1}\cdot\\
            &\quad\tD\textrm{L}_{\expG(-\mu_2)}\cdot\tD \expG(\mu_2)\; [\phi_2(g)]\\
            =& J_{\mu_1}^{-1}J_{\mu_2}\; [\phi_2(g)].
        \end{align*}
        Note that due to the choice of $\mu_2$ in \eqref{eq:change_reference_mu}, $\tD\logGv$ in the Taylor series is evaluated at $\expG(\mu_1)$.
        Substitute the result into \eqref{eq:R_prime_expect}:
        \begin{align*}
            \Sigma_2 &= \E_p[\phi_2(g)\phi_2(g)^\top]
            \approx J_{\mu_2}^{-1}J_{\mu_1}\Sigma_1J_{\mu_1}^\top J_{\mu_2}^{-\top},
        \end{align*}
        which follows from the definition of $\Sigma_1 =\E_p[\phi_1(g)\phi_1(g)^\top]$.
    \end{proof}
    
    Note that \eqref{eq:change_reference_mu} is the exact coordinates of the mean in the new coordinates, and that \eqref{eq:change_reference_cov} is the covariance conjugated by the Jacobian of the change-of-coordinates maps.
    That is, Lemma \ref{lemma:change_reference} can be thought of as transforming a Gaussian distribution under a non-linear change of coordinates.
    
    \begin{remark}
        In the special case when $\mu_1=0$ or $\mu_2=0$, the covariance $\Sigma_2$ is given by
        \begin{align*}
            \Sigma_2 = J_{\mu_2}^{-1}\Sigma_1 J_{\mu_2}^{-\top}\quad \text{or}\quad \Sigma_2 = J_{\mu_1}\Sigma_1 J_{\mu_1}^\top,
        \end{align*}
        respectively.
        Both cases can happen in the fusion problem, as discussed in Sec~\ref{sec:method}.
    \end{remark}
    
    \subsection{Approximation with Curvature}\label{sec:PCT_approx}
    The result in Lemma~\ref{lemma:change_reference} relies on computing the linear map $J_\mu:\R^n\to\R^n$.
    However, as presented in Sec~\ref{sec:generalLie}, this map is transcendental and except in special cases, must be computed using approximations of infinite power series.
    Analytic formulae in terms of classical trigonometric functions such as $\cos$ and $\sin$ exist for a limited range of Lie groups such as $\SO(3)$ and $\SE(3)$.
    In this section we propose a method to approximate the Jacobian using geometric structure of the Lie group.
    

    \begin{theorem}\label{theorem:curv}
        For any $u\in\gothg$, the Jacobian map $J_u:\R^n\to\R^n$ defined in \eqref{eq:jacobian} satisfies
        \begin{align*}
        J_u [w^\vee] & = \pt^\vee_u[w +  \frac{1}{6}\mathcal{R}(u,w)u]^\vee + \mathbf{o}(|u^\vee|^3)\\
        &\approx \Ad^\vee_{\expG(-\frac{1}{2} u)} \left(\textrm{I}+\frac{1}{24}{\ad^\vee_u}^2\right)[w^\vee],
        \end{align*}
        for all $w \in \gothg$.
    \end{theorem}
    \begin{proof}
        Define the geodesic $\gamma(t):= \expG(tu)$ with $t\in I$, where $I\subseteq \R$ is an open interval containing 0.
        Define the Jacobi field $\jacobi_w(t)$ to be the unique solution of the Jacobi equation \cite{langFundamentalsDifferentialGeometry2012}
        \[
            \tD^2_t \jacobi_w(t) + \mathcal{R}(\jacobi_w(t), \dot{\gamma}(t))\dot{\gamma}(t)=0
        \]
        for $\jacobi_w(0) = 0$  and $\tD_t\jacobi_w(0) = w$.
        For $t\in I$, one has \cite[Theorem 3.1]{langFundamentalsDifferentialGeometry2012}
        \[
            \tD_{tu}\expG[w] = \frac{1}{t}\jacobi_w(t).
        \]
        Consider the following map: $t\to(\pt_{tu})^{-1}\left(\frac{1}{t}\jacobi_w(t)\right)$.
        Note that taking the Taylor expansion of this map around $t=0$ is equivalent to studying the Taylor expansion of $(\pt_u)^{-1}\circ\tD_u\expG[w]$ around $u=0$.
        Applying \cite[Theorem A.2.9]{waldmannGeometricWaveEquations2012a} yields
        \[
        (\pt_u)^{-1}\circ\tD_u\expG [w] = w + \frac{1}{6}\mathcal{R}(u,w)u +\mathbf{o}(|u|^3).
        \]
        where the order $\order$ approximation is understood over a basis in the Lie-algebra $\gothg$.
        Take the first order approximation,
        \begin{align*}
            J_u [w^\vee] &= \tD\textrm{L}_{\expG(-u)}\circ\tD_u\expG [w]^\vee \\
            & = \pt^\vee_u[w +  \frac{1}{6}\mathcal{R}(u,w)u]^\vee +\order(|u^\vee|^3)\\
            &\approx \pt^\vee_u[w -  \frac{1}{24}[[u,w],u]]^\vee\tag{Equation~\ref{eq:curvature_liegroup}}\\
            &= \pt^\vee_u[w + \frac{1}{24}\ad_u^2(w)]^\vee\tag{Anti-symmetry of $[\cdot,\cdot]$}\\
            &= \pt^\vee_u(\textrm{I}+\frac{1}{24}{\ad^\vee_u}^2)[w^\vee]\tag{Bilinearity of $[\cdot,\cdot]$}\\
            &= \Ad_{\expG(-\frac{1}{2}u)}^\vee(\textrm{I}+\frac{1}{24}{\ad^\vee_u}^2)[w^\vee].\tag{Definition of $\pt^\vee_u$}
        \end{align*}
        This completes the proof.
    \end{proof}

    We show that this approximation captures enough of the necessary information of the Jacobian of the exponential map to achieve good fusion results at a low computational cost.
    
    \section{Fusion on Lie groups}\label{sec:method}
    In this section, we propose a general methodology to fuse multiple concentrated Gaussians on Lie groups.
    The proposed methodology has 3 steps.
    The first step is to compute a reference point $\hat{x} \in \grpG$.
    In the second step, one of the approximation methods is used to express the  independent concentrated Gaussians provided as data as extended concentrated Gaussians with respect to the chosen reference point.
    In these coordinates, classical Gaussian fusion is applied.
    The last step is to rewrite the fused extended concentrated Gaussian as a concentrated Gaussian around the group element corresponding to its mean.

    \subsection*{Step 1: Reference}
    
    The goal of the first step of the methodology is to choose a reference point $\hat{x} \in \grpG$ as close to the true mean of the fused distribution as possible for the least reasonable computational cost.
    This point will be used as the reference point for the approximation of the independent concentrated Gaussians.
    The closer $\hat{x}$ is to the correct group-mean, the less approximation error will be incurred before the full fusion process is undertaken.
    However, spending excessive computation at this point is wasted since the independent concentrated Gaussians are defined at different points on the Lie group anyway and, as long as $\hat{x}$ is roughly central to the data, the particular choice of reference point will make little difference to the approximation.
    
    Choosing a reference point is common to most Lie-group fusion algorithms.
    In \cite{geNoteExtendedKalman2023}, and in general for Extended Kalman Filters (EKF), the mean of the prior distribution is used as reference.
    For the fusion of Gaussians, the mean $x_1$ of the first independent distribution can be used as reference.
    In \cite{wolfeBayesianFusionLie2011}, an initial estimate $\hat{x}$ is derived by using the naive fusion method.
    Such choice can be iterated to achieve better accuracy at a higher computational cost.
    The authors in \cite{barfootAssociatingUncertaintyThreeDimensional2014} use
    an optimization process to compute $\hat{x}$.
    
    \noindent\emph{Naive Fusion: }
    We use a simple algorithm to choose $\hat{x}$ that will also act as benchmark for the comparison study in Section~\ref{sec:simulation}.
    Consider $n$ independent estimates $\GP_{x_i}(0,\Sigma_i)$ for $i = 1,\ldots, n$ of a random variable $g \in \grpG$.
    Consider exponential coordinates on the Lie group $\grpG$ around the origin.
    Approximate
    \[
    \GP_{x_i}(0,\Sigma_i) \approx \GP_{\id}(\logG(x_i),\Sigma_i)
    \]
    by an extended concentrated Gaussian in origin coordinates without any consideration of the change of coordinates on the covariance $\Sigma_i$.
    The distributions are now Gaussian in a single set of coordinates (the Lie algebra) and classical fusion is used to estimate the mean of the distribution
    \begin{align*}
    \hat{\Sigma}  = \left( \sum_{i = 1}^n \Sigma_i^{-1} \right)^{-1},
    \quad\quad\quad
    \hat{\mu}^\vee  = \hat{\Sigma} \sum_{i = 1}^n \Sigma_i^{-1} \logGv(x_i).
    \end{align*}
    Set $\hat{x} = \exp_{\grpG}(\hat{\mu})$, the final distribution is $\GP_{\hat{x}}(0,\hat{\Sigma})$.
    
    
    
    \subsection*{Step 2: Fusion}
    
    Consider independent estimates $p(g|x_i,\Sigma_i) = \GP_{x_i}(0,\Sigma_i)$
    for $i = 1, \ldots, n$.
    We approximate each distribution by an extended concentrated Gaussian
    \[
    p(g|x_i,\Sigma_i) \approx \GP_{\hat{x}}(\mu_i, \hat{\Sigma}_i)
    \]
    where $\mu_i = \logG(\hat{x}^{-1}x_i)$ and $\hat{\Sigma}_i$ is given by the chosen approximation scheme. \\
    \emph{Full Jacobian:}
    The most direct approximation is provided by applying Lemma~\ref{lemma:change_reference}:
    \begin{align}
    \hat{\Sigma}_i = J_{\mu_i}^{-1}\Sigma_i J_{\mu_i}^{-\top}
    \label{eq:Sigma}
    \end{align}
    where an analytic expression for the inverse Jacobian $J_{\mu_i}^{-1}$ can be computed. \\
    \emph{Approximate Jacobian:}
    If no analytic version of the inverse Jacobian $J_{\mu_i}^{-1}$ is available, we can approximate this by Taylors expansions.
    We consider both first and second order approximations \cite{barfootAssociatingUncertaintyThreeDimensional2014}.\\
    \emph{Parallel Transport and Curvature:}
    As discussed in Section~\ref{sec:PCT_approx} the Jacobian can be approximated by parallel transport $J_{\mu_i}^{-1} \approx \pt_{\mu_i}^{-1}$ or recalling Theorem~\ref{theorem:curv} by parallel transport and curvature
    \[
    J_{\mu_i}^{-1}
    \approx
    \left( \mathrm{I} + \frac{1}{24} \ad^2_{\mu_i}\right)^{-1}  \pt_{\mu_i}^{-1}
    \approx
    \left( \mathrm{I} - \frac{1}{24} \ad^2_{\mu_i}\right) \pt_{\mu_i}^{-1}.
    \]

    
    
    Once the Gaussians are written in the same coordinates the distributions are
    fused using classical Gaussian fusion.
    \begin{align}
    \Sigma_\diamond  = \left(\sum_{i = 1}^n \hat{\Sigma}_i^{-1} \right)^{-1}, 
    \quad\quad\quad
    {\mu^+}^\vee  = \Sigma_\diamond \left(\sum_{i=1}^n \hat{\Sigma}_i^{-1} \mu_i^\vee\right).
    \label{eq:fusion}
    \end{align}
    Under the assumption that the distributions are independent, the fused estimate $(\mu^+,\Sigma_\diamond)$ is optimal with respect to multiple criteria such as the weighted least squares error, minimum covariance estimation error and maximum-likelihood estimation \cite{schweppeUncertainDynamicSystems1973}.
    
    \subsection*{Step 3: Reset}\label{sec:reset}
    
    The outcome of the fusion step is an extended concentrated Gaussian with non-zero mean $\GP_{\hat{x}}(\mu^+, \Sigma_\diamond)$.
    If a concentrated Gaussian is required, which is the normal case for most filtering algorithms and estimators, then the extended concentrated Gaussian estimate must be transformed into a concentrated Gaussian around a new group mean.
    The problem is equivalent to finding $x_+ \in \grpG$ and $\Sigma_+ \in \mathbb{S}_+(n)$ such that
    \[
    \GP_{\hat{x}}(\mu^+, \Sigma_\diamond) \approx \GP_{\hat{x}^+}(0, \Sigma^+).
    \]
    This is also a direct application of Lemma~\ref{lemma:change_reference}.
    The new reference point $\hat{x}^+$ and the covariance $\Sigma^+$ are given by
    \begin{align*}
        &\hat{x}^+ = \hat{x}\;\expG(\mu^+),\\
        &\Sigma^+ = J_{\mu^+}\:\Sigma_\diamond \:J_{\mu^+}^\top.
    \end{align*}
    The implementation of the reset will require either a full analytic version of the Jacobian to be computed, or an approximation to be used based on one of the methods discussed in Step 2.
    The new distribution $\GP_{\hat{x}^+}(0, \Sigma^+)$ is a zero-mean concentrated Gaussian.

    \section{Simulation}\label{sec:simulation}
    
    To numerically evaluate the proposed methods we undertake a simulation on the special orthorgonal group $\SO(3)$.
    This group provides an intuitive example for which the formula are relatively straightforward to understand.
    Following the experimental study in \cite{wolfeBayesianFusionLie2011}
    we consider two zero-mean concentrated Gaussians on $\SO(3)$, denoted by $p_1(g)=\GP_{x_1}(0,\Sigma_1)$ and $p_2(g)=\GP_{x_2}(0,\Sigma_2)$.
    The parameters are chosen as
    \small
    \[
        x_1 = \expG\left(\frac{\gamma}{\sqrt{3}} \;\begin{bmatrix}
            1.0\\1.0\\-1.0
        \end{bmatrix}\right),\quad x_2 = \expG\left(\frac{\gamma}{\sqrt{2}} \;\begin{bmatrix}
            1.0\\ -1.0\\ 0.0
        \end{bmatrix}\right)
    \]
    \normalsize
    and
    \small
    \[
        \Sigma_1 = \xi  \begin{bmatrix}
            1.0 & 0 & 0 \\
            0 & 0.75 & 0 \\
            0 & 0 & 0.5
        \end{bmatrix}, \quad
        \Sigma_2 = \xi \begin{bmatrix}
            0.5 & 0 & 0 \\
            0 & 1.0 & 0 \\
            0 & 0 & 0.75
        \end{bmatrix}.
    \]
    \normalsize
    The scalars $\gamma$, and $\xi \in \R_+$ control the distance between means and the concentration of the covariance.
    As $\gamma$ increases and $\xi$ decreases the fusion becomes more non-linear.

        To evaluate the performance, we use the cost function proposed in \cite{wolfeBayesianFusionLie2011}.
        Define $p_{12}(g):=\frac{p_1(g) p_2(g)}{\int_G p_1(h) p_2(h) \td h}$ to be the fused probability density of $p_1(g)$ and $p_2(g)$.
        We use the $\ell_1$ metric:
        \begin{align}\label{eq:err_metric}
            C:=\left|p^+-p_{12}\right|_{\ell_1}
            =\int_G\left|p^+(g)-p_{12}(g)\right| \td g,
        \end{align}
        where $p^+(g) = \GP_{\hat{x}^+}(0,\Sigma^+)$ is the estimated distribution using different methods.
        This metric evaluates the difference between the approximated concentrated Gaussian distribution and the underlying fused density, and can be interpreted as the \emph{total variation distance} between them.
    
    It is implemented by uniform sampling over a bounded domain on $\so(3)$.

    We run the simulation with different combinations of $\gamma$ and $\xi$, where both parameters are varied from 0.1 to 1.8.
    Each result is averaged over a Monte Carlo simulation with 500 runs
    where the covariance matrices are rotated by random rotation matrices.
    
    We present the main results demonstrating the performance of different approximation methods.
    The Naive method is implemented directly.
    Approximate Jacobian and parallel transport methods proposed in Section~\ref{sec:method} are implemented as described with the same Jacobian approximation used in the reset.
    For comparison, we include the BCH-based methods (first and second order) proposed in \cite{wolfeBayesianFusionLie2011}.
    We use the standard minimizer in the SciPy library to implement the optimisations required by these methods.
    The BCH methods do not require reset.
    All methods except the naive one use the same naive posterior as the initial guess $\hat{x}$.
    
    \begin{figure}[htb!]
        \includegraphics[width=0.7\linewidth]{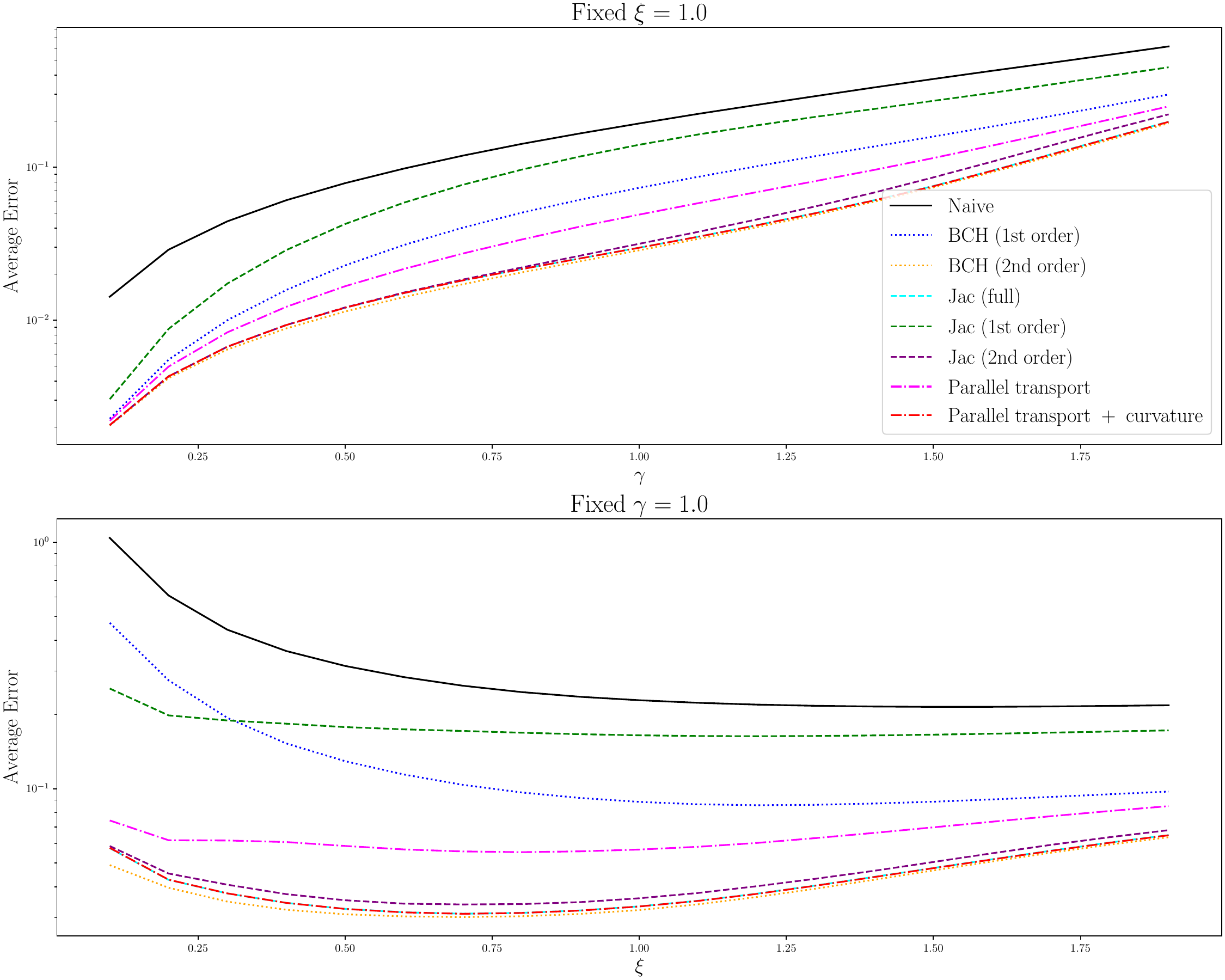}
        \centering
        \caption{
            Estimation error using different approximation methods with fixed $\gamma = 1.0$ or $\xi=1.0$.
            }
        \label{fig:main_slice}
    \end{figure}

    \begin{figure}[htb!]
        \centering
        \includegraphics[width=0.7\linewidth]{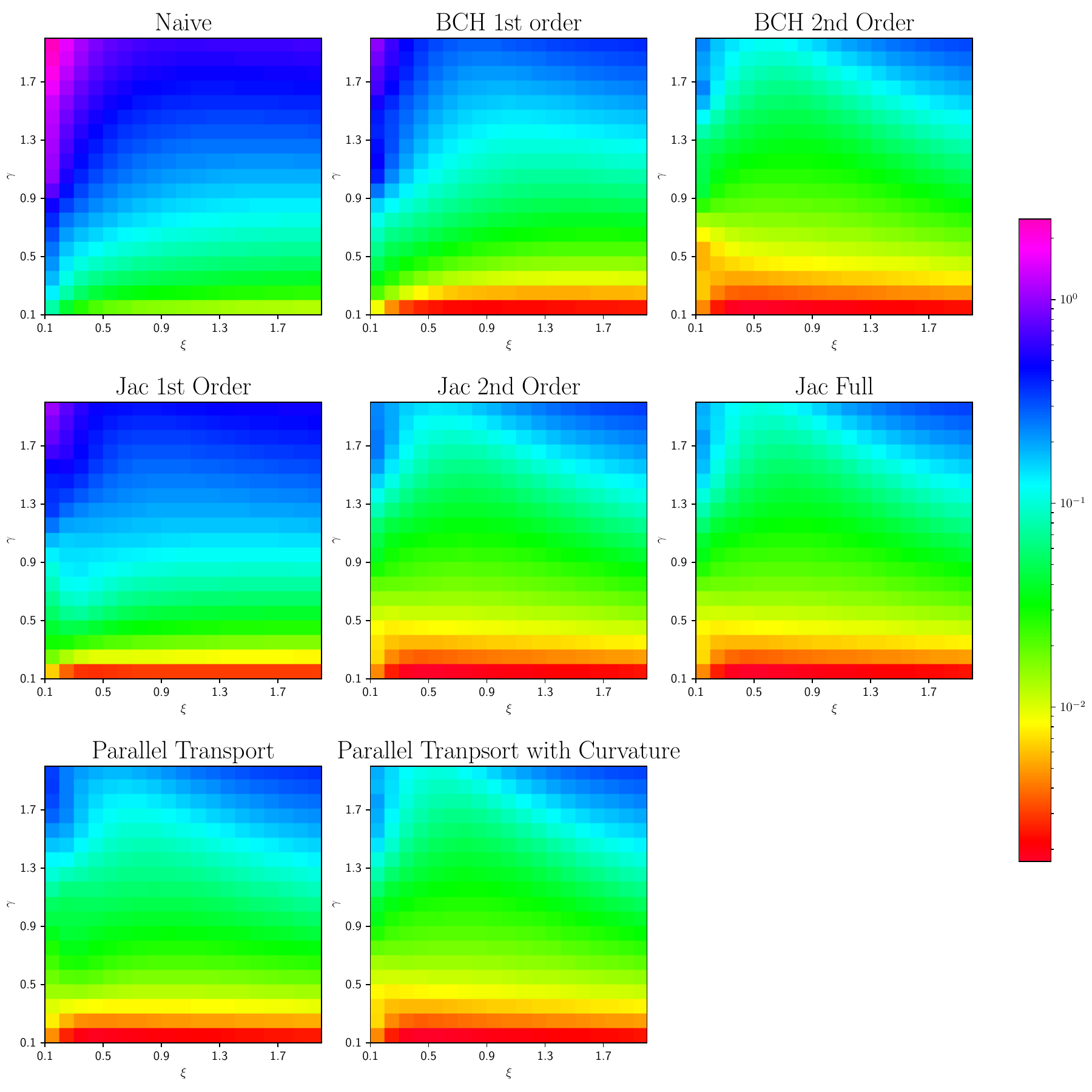}
        \caption{
            Heatmaps showing the estimation error with different approximation methods.
            The full Jacobian and PTC perform identically, while second-order BCH method has slightly better performance for small $\gamma$.
            }
        \label{fig:main_heatmap}
    \end{figure}
    
    Fig~\ref{fig:main_time} plots the average error against relative processing time of each method, averaged over all the parameter values of $\xi$ and $\gamma$.
    To account for the dependence on the computer hardware used, we show the ratio of processing time for each algorithm as compared to the naive fusion.
    Since all algorithms require the naive fusion as a first estimate of the reference point, this ratio is always greater than one.
    In Fig~\ref{fig:main_slice}, we present the average estimation error of different methods when either $\gamma$ or $\xi$ is fixed at 1.0 and the other parameter changes.
    Fig~\ref{fig:main_heatmap} shows how the error in each method varies with different combinations of the parameters $\gamma$ and $\xi$.
    
    In general, the naive fusion method has the worst performance among all the methods, as expected.
    The first order methods (BCH 1st order and Jac 1st order) perform marginally better but still show significant approximation error, and especially so for large values of $\gamma$.
    Interestingly, the first order BCH method is also highly sensitive to small $\xi$.
    The parallel transport method can also be thought of as a first order method.
    It performs better than the other first order methods but incurs more performance error than the remaining methods.
    
    The last four methods all use higher order information in the Jacobian approximation (higher order terms in the BCH expansion for the BCH method).
    Clearly, the full Jacobian outperforms the second order Taylor approximation, although the difference is less significant than the difference with the first order Jacobian.
    The full order Jacobian, parallel transport with curvature (PTC) and BCH 2nd achieve very similar average error.
    The combination of parallel transport and curvature clearly captures the major nonlinearities in the full Jacobian without requiring computation of the analytic form of the Jacobian.
    This is particularly of interest for Lie groups where a closed-form expression of the Jacobian is not available; the parallel transport only requires computing a matrix exponential, which has an efficient implementation in many linear algebra programming libraries.
    The second order BCH method achieves the lowest average error in our simulations.
    However, the BCH-based methods do not admit closed-form solutions and can only be implemented with an optimization process, resulting in much higher computational cost ($>100$ times greater than the PTC method), as shown in Fig~\ref{fig:main_time}.
    These results are based on applying a standard optimisation routine, however, even a tailored optimisation algorithm is unlikely to significantly reduce this computational discrepancy.

    \section{Conclusion}
    
    This paper presents a general design methodology for fusion of independent concentrated Gaussian distributions on Lie groups from a geometric perspective.
    It is shown that by transforming a collection of distributions into a single, unified set of coordinates, the fusion problem can be solved using the same methodology as in the Euclidean case.
    In the simulation, we present the effectiveness of the different methods proposed, and, in particular, show that the parallel transport with curvature correction can achieve good performance at a low computational cost, while using only functions that are available in most linear algebra libraries.

\bibliography{reference}

\begin{thebibliography}{10}
\providecommand{\url}[1]{#1}
\csname url@samestyle\endcsname
\providecommand{\newblock}{\relax}
\providecommand{\bibinfo}[2]{#2}
\providecommand{\BIBentrySTDinterwordspacing}{\spaceskip=0pt\relax}
\providecommand{\BIBentryALTinterwordstretchfactor}{4}
\providecommand{\BIBentryALTinterwordspacing}{\spaceskip=\fontdimen2\font plus
\BIBentryALTinterwordstretchfactor\fontdimen3\font minus
  \fontdimen4\font\relax}
\providecommand{\BIBforeignlanguage}[2]{{%
\expandafter\ifx\csname l@#1\endcsname\relax
\typeout{** WARNING: IEEEtran.bst: No hyphenation pattern has been}%
\typeout{** loaded for the language `#1'. Using the pattern for}%
\typeout{** the default language instead.}%
\else
\language=\csname l@#1\endcsname
\fi
#2}}
\providecommand{\BIBdecl}{\relax}
\BIBdecl

\bibitem{kalmanNewApproachLinear1960}
R.~E. Kalman, ``A {{New Approach}} to {{Linear Filtering}} and {{Prediction
  Problems}},'' \emph{Journal of Basic Engineering}, vol.~82, no.~1, pp.
  35--45, Mar. 1960.

\bibitem{barrauInvariantExtendedKalman2017}
A.~Barrau and S.~Bonnabel, ``The {{Invariant Extended Kalman Filter}} as a
  {{Stable Observer}},'' \emph{IEEE Transactions on Automatic Control},
  vol.~62, no.~4, pp. 1797--1812, Apr. 2017.

\bibitem{mahonyObserverDesignNonlinear2022}
R.~Mahony, P.~{van Goor}, and T.~Hamel, ``Observer {{Design}} for {{Nonlinear
  Systems}} with {{Equivariance}},'' \emph{Annual Review of Control, Robotics,
  and Autonomous Systems}, vol.~5, no.~1, pp. 221--252, May 2022.

\bibitem{wangErrorPropagationEuclidean2006}
Y.~Wang and G.~Chirikjian, ``Error propagation on the {{Euclidean}} group with
  applications to manipulator kinematics,'' \emph{IEEE Transactions on
  Robotics}, vol.~22, no.~4, pp. 591--602, Aug. 2006.

\bibitem{bourmaudIntrinsicOptimizationIterated2016}
G.~Bourmaud, R.~M{\'e}gret, A.~Giremus, and Y.~Berthoumieu, ``From {{Intrinsic
  Optimization}} to {{Iterated Extended Kalman Filtering}} on {{Lie Groups}},''
  \emph{Journal of Mathematical Imaging and Vision}, vol.~55, no.~3, pp.
  284--303, Jul. 2016.

\bibitem{barfootAssociatingUncertaintyThreeDimensional2014}
T.~D. Barfoot and P.~T. Furgale, ``Associating {{Uncertainty With
  Three-Dimensional Poses}} for {{Use}} in {{Estimation Problems}},''
  \emph{IEEE Transactions on Robotics}, vol.~30, no.~3, pp. 679--693, Jun.
  2014.

\bibitem{wolfeBayesianFusionLie2011}
K.~C. Wolfe and M.~Mashner, ``Bayesian {{Fusion}} on {{Lie Groups}},''
  \emph{Journal of Algebraic Statistics}, vol.~2, no.~1, Apr. 2011.

\bibitem{geEquivariantFilterDesign2022}
Y.~Ge, P.~{van Goor}, and R.~Mahony, ``Equivariant {{Filter Design}} for
  {{Discrete-time Systems}},'' in \emph{2022 {{IEEE}} 61st {{Conference}} on
  {{Decision}} and {{Control}} ({{CDC}})}, Dec. 2022, pp. 1243--1250.

\bibitem{geNoteExtendedKalman2023}
Y.~Ge, P.~Van~Goor, and R.~Mahony, ``A {{Note}} on the {{Extended Kalman
  Filter}} on a {{Manifold}},'' in \emph{2023 62nd {{IEEE Conference}} on
  {{Decision}} and {{Control}} ({{CDC}})}, Dec. 2023, pp. 7687--7694.

\bibitem{helgasonDifferentialGeometryLie1979}
S.~Helgason, \emph{Differential {{Geometry}}, {{Lie Groups}}, and {{Symmetric
  Spaces}}}.\hskip 1em plus 0.5em minus 0.4em\relax Academic Press, Feb. 1979.

\bibitem{chirikjianStochasticModelsInformation2011}
G.~S. Chirikjian, \emph{Stochastic {{Models}}, {{Information Theory}}, and
  {{Lie Groups}}, {{Volume}} 2: {{Analytic Methods}} and {{Modern
  Applications}}}.\hskip 1em plus 0.5em minus 0.4em\relax Springer Science \&
  Business Media, Nov. 2011.

\bibitem{kobayashiFoundationsDifferentialGeometry1996a}
S.~Kobayashi and K.~Nomizu, \emph{Foundations of {Differential} {Geometry},
  {Volume} 2}.\hskip 1em plus 0.5em minus 0.4em\relax John Wiley \& Sons, Feb.
  1996.

\bibitem{langFundamentalsDifferentialGeometry2012}
S.~Lang, \emph{Fundamentals of {{Differential Geometry}}}.\hskip 1em plus 0.5em
  minus 0.4em\relax Springer Science \& Business Media, Dec. 2012.

\bibitem{waldmannGeometricWaveEquations2012a}
S.~Waldmann, ``Geometric wave equations,'' \emph{arXiv preprint
  arXiv:1208.4706}, 2012.

\bibitem{schweppeUncertainDynamicSystems1973}
F.~C. Schweppe, \emph{Uncertain {{Dynamic Systems}}}.\hskip 1em plus 0.5em
  minus 0.4em\relax Prentice-Hall, 1973.

\end{thebibliography}
\bibliographystyle{IEEEtran}

\end{document}